\tikzset{snake it/.style={decorate, decoration=snake}}
\newcommand{\pars}[1]{\left(#1\right)}
\newcommand{\abs}[1]{\lvert#1\rvert}
\newcommand{\mr}{\mathrm}
\newtheorem{thm}{Theorem}
\newtheorem{cor}[thm]{Corollary}
\newtheorem{lem}[thm]{Lemma}
\newtheorem{df}[thm]{Definition}
\newtheorem{rem}[thm]{Remark}
\newtheorem{exa}[thm]{Example}
\DeclareMathOperator*{\argmax}{arg\, max}
\DeclareMathOperator{\logit}{logit}
\begin{document}
	\title{
		Kolmogorov structure functions for automatic complexity
	}
	\author{
		Bj{\o}rn Kjos-Hanssen\footnote{
			This work was partially supported by
			a grant from the Simons Foundation (\#315188 to Bj\o rn Kjos-Hanssen).
			The author also acknowledges the support of the Institute for Mathematical Sciences of the National University of Singapore
			during the workshop on \emph{Algorithmic Randomness}, June 2--30, 2014.
		}
	}
	\maketitle{}
	\begin{abstract}
		For a finite word $w$
		we define and study the Kolmogorov structure function $h_w$ for nondeterministic automatic complexity.
		We prove upper bounds on $h_w$ that appear to be quite sharp, based on numerical evidence.
	\end{abstract}

	\section{Introduction}
		Shallit and Wang \cite{MR1897300} introduced automatic complexity as a computable alternative to Kolmogorov complexity.
		They considered deterministic automata, whereas Hyde and Kjos-Hanssen \cite{COCOON:14} studied the nondeterministic case,
		which in some ways behaves better.
		Unfortunately, even nondeterministic automatic complexity is somewhat inadequate.
		The string $00010000$ has maximal nondeterministic complexity, even though intuitively it is quite simple.
		One way to remedy this situation is to consider a structure function analogous to that for Kolmogorov complexity.

		The latter was introduced by Kolmogorov at a 1973 meeting in Tallinn and studied by
		Vereshchagin and Vit\'anyi \cite{MR2103496} and Staiger \cite{StaigerTCS}.

		\begin{figure}[h]
			\begin{tikzpicture}[shorten >=1pt,node distance=1.5cm,on grid,auto]
				\node[state,initial, accepting] (q_1)   {$q_1$}; 
				\node[state] (q_2)     [right=of q_1   ] {$q_2$}; 
				\node[state] (q_3)     [right=of q_2   ] {$q_3$}; 
				\node[state] (q_4)     [right=of q_3   ] {$q_4$};
				\node        (q_dots)  [right=of q_4   ] {$\ldots$};
				\node[state] (q_m)     [right=of q_dots] {$q_m$};
				\node[state] (q_{m+1}) [right=of q_m   ] {$q_{m+1}$}; 
				\path[->] 
					(q_1)     edge [bend left]  node           {$x_1$}     (q_2)
					(q_2)     edge [bend left]  node           {$x_2$}     (q_3)
					(q_3)     edge [bend left]  node           {$x_3$}     (q_4)
					(q_4)     edge [bend left]  node [pos=.45] {$x_4$}     (q_dots)
					(q_dots)  edge [bend left]  node [pos=.6]  {$x_{m-1}$} (q_m)
					(q_m)     edge [bend left]  node [pos=.56] {$x_m$}     (q_{m+1})
					(q_{m+1}) edge [loop above] node           {$x_{m+1}$} ()
					(q_{m+1}) edge [bend left]  node [pos=.45] {$x_{m+2}$} (q_m)
					(q_m)     edge [bend left]  node [pos=.4]  {$x_{m+3}$} (q_dots)
					(q_dots)  edge [bend left]  node [pos=.6]  {$x_{n-3}$} (q_4)
					(q_4)     edge [bend left]  node           {$x_{n-2}$} (q_3)
					(q_3)     edge [bend left]  node           {$x_{n-1}$} (q_2)
					(q_2)     edge [bend left]  node           {$x_n$}     (q_1);
			\end{tikzpicture}
			\caption{
				A nondeterministic finite automaton that only accepts one string
				$x= x_1x_2x_3x_4 \ldots x_n$ of length $n = 2m + 1$.
			}
			\label{fig1}
		\end{figure}

		The Kolmogorov complexity of a finite word $w$ is roughly speaking
		the length of the shortest description $w^*$ of $w$ in a fixed formal language.
		The description $w^*$ can be thought of as an optimally compressed version of $w$.
		Motivated by the non-computability of Kolmogorov complexity,
		Shallit and Wang studied a deterministic finite automaton analogue.
		\begin{df}[Shallit and Wang \cite{MR1897300}]
			The \emph{automatic complexity} of a finite binary string \(x=x_1\dots x_n\) is 
			the least number \(A_D(x)\) of states of a {deterministic finite automaton} \(M\) such that 
			\(x\) is the only string of length \(n\) in the language accepted by \(M\).
		\end{df}
		Hyde and Kjos-Hanssen \cite{COCOON:14} defined a nondeterministic analogue:
		\begin{df}\label{precise}
			The nondeterministic automatic complexity $A_N(w)$ of a word $w$ is the minimum number of states of an NFA $M$,
			having no $\epsilon$-transitions, accepting $w$
			such that there is only one accepting path in $M$ of length $\abs{w}$.
		\end{df}
		The minimum complexity $A_N(w)=1$ is only achieved by words of the form $a^n$ where $a$ is a single letter.
		\begin{df}\label{df:KayleighGraph}
			Let $n = 2m + 1$ be a positive odd number, $m\ge 0$.
			A finite automaton of the form given in Figure \ref{fig1} for some choice of symbols $x_1,\dots,x_n$ and states
			$q_1,\dots,q_{m+1}$
			is called a \emph{Kayleigh graph}\footnote{
				The terminology is a nod to the more famous Cayley graphs as well as to Kayleigh Hyde's first name.
			}.
		\end{df}
		\begin{thm}[Hyde \cite{Hyde}]\label{Hyde}
			The nondeterministic automatic complexity $A_N(x)$ of a string $x$ of length $n$ satisfies
			\[
				A_N(x) \le b(n):={\lfloor} n/2 {\rfloor} + 1\text{.}
			\]
		\end{thm}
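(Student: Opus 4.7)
The plan is to exhibit, for every string $x$ of length $n$, an NFA with at most $\lfloor n/2\rfloor+1$ states that accepts $x$ and admits only one accepting path of length $n$; the construction splits on the parity of $n$.

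For odd $n=2m+1$, I would take the Kayleigh graph of Figure~\ref{fig1} with edges labeled by $x_1,\dots,x_n$ as indicated. It has $m+1=\lfloor n/2\rfloor+1$ states, and the natural walk $q_1\to q_2\to\dots\to q_{m+1}\to q_{m+1}\to q_m\to\dots\to q_1$ is an accepting path spelling $x$. The main task is to show this is the only accepting path of length $n$. I would argue this by bookkeeping edge-multiplicities: writing $f_i$, $b_i$, and $s$ for the numbers of uses of the forward edge $q_i\to q_{i+1}$, the backward edge $q_{i+1}\to q_i$, and the self-loop respectively, flow conservation at every vertex forces $f_i=b_i$ for all $i$, and the length identity $2\sum_i f_i+s=2m+1$ then forces $s$ odd. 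Since $s\ge 1$ the path must reach $q_{m+1}$, so every $f_i\ge 1$ and $\sum_i f_i\ge m$; combined with $s\ge 1$ this pins $s=1$ and $f_i=b_i=1$ for every $i$. The resulting Eulerian walk from $q_1$ back to $q_1$ must take the self-loop before starting its return trip, since otherwise it would arrive back at $q_1$ with the self-loop still unused and with no outgoing edge available; this identifies the walk as the natural one.

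For even $n=2m$ the Kayleigh graph is one state short of the bound, so I would use a slight modification: keep the same $m+1$ states and forward/backward/self-loop edges but shift the accepting state from $q_1$ to $q_2$ and delete the edge $q_2\to q_1$. The natural walk $q_1\to q_2\to\dots\to q_{m+1}\to q_{m+1}\to q_m\to\dots\to q_2$ then has length $2m$ and spells $x$ under the analogous labeling. The same bookkeeping, now with the boundary conditions that $q_1$ has no incoming edge (so $f_1=1$ is forced) and that $q_2$ is the endpoint, yields $f_i=b_i=1$ for $2\le i\le m$ and $s=1$, and the same ``no early turnaround'' case analysis again forces the natural walk. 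The main obstacle in both cases is that Definition~\ref{precise} requires uniqueness among all accepting paths of length $n$, not merely among those labeled $x$, so the construction must enforce uniqueness at the level of the graph's shape; the strategically placed self-loop is exactly what makes this work, since a plain ``there-and-back'' graph without it would admit parasitic oscillating paths between adjacent vertices.
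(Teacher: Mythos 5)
Your proposal is correct and follows essentially the same route as the paper: a Kayleigh graph (Figure \ref{fig1}) witnesses the bound for odd $n$, and a slight modification (shifting the accept state and dropping one backward edge) handles even $n$. The only difference is that the paper defers the uniqueness-of-accepting-path verification to the cited thesis \cite{Hyde}, whereas you supply it explicitly via the edge-multiplicity/flow-conservation argument, which is sound.
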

		\begin{proof}
			If $n$ is odd, then a Kayleigh graph witnesses this inequality.
			If $n$ is even, a slight modification suffices, see \cite{Hyde}.
		\end{proof}	
		The structure function of a string $x$ is defined by
		$h_x(m)= \min \{ k:$ there is a $k$-state NFA $M$ which accepts at most $2^m$ strings of length $\abs{x}$ including $x\}$.
		In more detail:

		Let
		\[
			S_x = \{(q,m) \mid \exists \text{ $q$-state NFA $M$}, x\in L(M)\cap \Sigma^n, \abs{L(M)\cap\Sigma^n}\le b^m\}.
		\]
		Then $S_x$ has the upward closure property
		\[
			q\le q', m\le m', (q,m)\in S_x \quad\Longrightarrow\quad (q',m')\in S_x.
		\]
		From $S_x$ we can define the structure function $h_x$ and the dual structure function $h_x^*$.
		\begin{df}[Vereshchagin, personal communication, 2014, inspired by \cite{MR2103496}]
			In an alphabet $\Sigma$ containing $b$ symbols, we define 
			\[
				h^*_x(m) = \min\{k : (k,m)\in S_x \}\quad\text{and}
			\]
			\[
				h_x(k) = \min\{m: (k,m)\in S_x\}.
			\]
		\end{df}
		\begin{rem}
			On the one hand, $h$ mimics the structure function as defined by Kolmogorov.
			On the other hand, $h^*$ has a natural domain $[0,n]$ whereas the domain of $h$ is initially $[1,\infty)$,
			until some upper bound on the automatic complexity is proved, at which point it becomes $[1,\lfloor n/2\rfloor + 1]$.
			One often prefers that a function have a simple domain and a complicated range rather than the other way around, e.g., consider
			the case of the range of a computable function on $\mathbb N$ (which is only computably enumerable).
		\end{rem}
		\paragraph{History of the structure function.}
		Kolmogorov first introduced the structure function in a talk at
		The Third International Symposium on Information Theory, June 18--23, 1973, Tallinn, Estonia, Soviet Union.
		The meeting coincided with a Nixon/Brezhnev meeting in the U.S. Kolmogorov was born in 1903 hence 70 years old at the time.
		The results were not published until they appeared as an abstract of a talk for the Moscow Mathematical Society \cite{Moscow} in
		\emph{Uspekhi Mat.\ Nauk} in the
		Communications of the Moscow Mathematical Society, page 155 (in the Russian edition, not translated into English).
		The talk was given on April 16, 1974 and was entitled ``Complexity of algorithms and objective definition of randomness''.

	\section{Basic properties}
		\begin{df}\label{df:entropy}
			The entropy function $\mathcal H:[0,1]\rightarrow [0,1]$ is given by
			\[
				\mathcal H(p)=-p\log_2p-(1-p)\log_2(1-p).
			\]
		\end{df}
		\begin{rem}
			Throughout the paper, $\log$ (with no subscript) denotes either the natural logarithm $\ln=\log_e$, or $\log_b$ where the value of $b$ is immaterial.
		\end{rem}
		\begin{thm}\label{familiar2Andre}
			For $0\le k\le n$,
			\[
				\log_2 {n\choose k} = \mathcal H(k/n) n + O(\log n).
			\]
		\end{thm}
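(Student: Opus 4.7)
The plan is to derive the identity directly from Stirling's approximation in the form
\[
\log_2(n!) = n\log_2 n - n\log_2 e + O(\log n),
\]
valid for all $n\ge 1$, applied separately to $n!$, $k!$, and $(n-k)!$.

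First I would write
\[
\log_2\binom{n}{k} = \log_2(n!) - \log_2(k!) - \log_2((n-k)!),
\]
substitute the Stirling expansion into each of the three factorials, and observe that the linear terms $-n\log_2 e + k\log_2 e + (n-k)\log_2 e$ cancel exactly, leaving
\[
\log_2\binom{n}{k} = n\log_2 n - k\log_2 k - (n-k)\log_2(n-k) + O(\log n).
\]
Next I would rearrange the right-hand side by writing $n\log_2 n = k\log_2 n + (n-k)\log_2 n$ and grouping:
\[
k\log_2\tfrac{n}{k} + (n-k)\log_2\tfrac{n}{n-k} = -n\bigl(\tfrac{k}{n}\log_2\tfrac{k}{n} + \tfrac{n-k}{n}\log_2\tfrac{n-k}{n}\bigr) = \mathcal{H}(k/n)\,n,
\]
by Definition \ref{df:entropy}. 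Combining the two displays gives the claim.

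The one subtlety is the behaviour at the endpoints $k=0$ and $k=n$: Stirling's approximation in the above form is fine only for strictly positive arguments, and the terms $k\log_2 k$ and $(n-k)\log_2(n-k)$ must be interpreted via the convention $0\log_2 0 = 0$ (consistent with the extension of $\mathcal{H}$ to $\{0,1\}$). In these cases $\binom{n}{k}=1$, so $\log_2\binom{n}{k}=0$ and $\mathcal{H}(k/n)n=0$, so the asserted bound $O(\log n)$ holds trivially and can be absorbed into the error term. Away from the endpoints, the Stirling error contributes an additive $O(\log k) + O(\log(n-k)) + O(\log n) = O(\log n)$, which is the only place where the implied constant matters; since this part is a routine computation I do not expect any real obstacle.
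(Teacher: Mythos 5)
Your proof is correct and follows essentially the same route as the paper: both arguments reduce $\log_2\binom{n}{k}$ to $n\log n - k\log k - (n-k)\log(n-k)$ with an $O(\log n)$ error and then regroup algebraically into $\mathcal H(k/n)\,n$. The only difference is that the paper derives the needed estimate $\log(u!) = u\log u - u + O(\log u)$ by hand, sandwiching $\sum_{k=2}^u \log k$ between integrals of $\log x$, whereas you invoke Stirling's approximation directly (and you handle the endpoint cases $k\in\{0,n\}$ more explicitly than the paper does).
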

		\begin{proof}
			For $u\in\mathbb N$, let
			\[
				S_u = \sum_{k=2}^u \log k,\quad I_u = \int_1^u \log x\,dx,\quad\text{and}\quad J_u = \int_2^{u+1} \log x\,dx.
			\]
			Let
			\[
				\alpha_n = \log {n\choose k}
				= S_n - S_{k} - S_{n-k}.
			\]
			Note $I_u\le S_u\le J_u$ and
			\[
				J_u-I_u =
				\int_u^{u+1} \log x\,dx - \int_1^2\log x\,dx \le \log(u+1),
			\]
			Thus up to $O(\log n)$ error terms we have
			\[
				\alpha_n = \int^n_1 \log x\,dx - \int_1^{k}\log x\,dx - \int_1^{n-k}\log x\,dx
			\]
			\[
				= (n\log n - n) - (k \log(k) - k) - [(n-k)\log(n-k) - (n-k)]
			\]
			\[
				= n\log n - k \log(k) - (n-k)\log(n-k)
			\]
			\[
				= - k \log(k/n) - (n-k)\log\pars{1-\frac{k}{n}}
			\]
			and hence
			\[
				\log_2 {n\choose k}
				= - k \log_2(k/n) - (n-k)\log_2(1-k/n) = \mathcal H(k/n)\cdot n.
			\]
		\end{proof}
		\begin{figure}[h]
			\begin{tikzpicture}[shorten >=1pt,node distance=1.5cm,on grid,auto]
				\node[state,initial] (q_1)   {$q_1$}; 
				\node[state] (q_2)     [right=of q_1   ] {$q_2$}; 
				\node[state] (q_3)     [right=of q_2   ] {$q_3$}; 
				\node[state] (q_4)     [right=of q_3   ] {$q_4$};
				\node        (q_dots)  [right=of q_4   ] {$\ldots$};
				\node[state] (q_m)     [right=of q_dots] {$q_m$};
				\node[state, accepting] (q_{m+1}) [right=of q_m   ] {$q_{m+1}$}; 
				\path[->] 
					(q_1)     edge [bend left]  node           {0}     (q_2)
					(q_2)     edge [bend left]  node           {0}     (q_3)
					(q_3)     edge [bend left]  node           {0}     (q_4)
					(q_4)     edge [bend left]  node [pos=.45] {0}     (q_dots)
					(q_dots)  edge [bend left]  node [pos=.6]  {0} (q_m)
					(q_m)     edge [bend left]  node [pos=.56] {0}     (q_{m+1})
					(q_1)     edge [loop above] node           {1} ()
					(q_2)     edge [loop above] node           {1} ()
					(q_3)     edge [loop above] node           {1} ()
					(q_4)     edge [loop above] node           {1} ()
					(q_{m+1}) edge [loop above] node           {1} ()
					(q_m)     edge [loop above] node           {1} ();
			\end{tikzpicture}
			\caption{
				An automaton illustrating multi-run complexity for a string of length $n$ containing $m$ many 0s, and $n-m$ many 1s.
			}
			\label{figMultiRun}
		\end{figure}
		\begin{thm}\label{thm:LowerEntropy}
			Suppose the number of $0$s in the binary string $x$ is $p\cdot n$.
			Then
			\[
				h^*_x(\mathcal H(p)n)\le pn + O(\log n).
			\]
		\end{thm}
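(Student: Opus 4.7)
The plan is to exhibit an explicit counting automaton and bound the size of its length-$n$ language via the sharp form of a standard binomial inequality. The construction turns out to need only $pn+1$ states, which is comfortably within the $pn+O(\log n)$ budget.

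First, I would construct a DFA $M$ with state set $\{q_0,q_1,\ldots,q_{pn}\}$: the initial state is $q_0$, the unique accepting state is $q_{pn}$, every state $q_i$ has a self-loop on the symbol $1$, and on input $0$ the automaton transitions from $q_i$ to $q_{i+1}$ for $0\le i<pn$ (the transition on $0$ at $q_{pn}$ is undefined, or equivalently routes to a nonaccepting sink). This automaton uses exactly $pn+1$ states, and a length-$n$ computation accepts precisely when the input contains $pn$ zeros. Consequently $L(M)\cap\{0,1\}^n$ is the set of all length-$n$ binary strings with exactly $pn$ zeros; it contains $x$ and has cardinality $\binom{n}{pn}$.

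The main step is to bound $\binom{n}{pn}$ by $2^{\mathcal{H}(p)n}$. I would avoid invoking Theorem \ref{familiar2Andre} here, since its asymptotic form only yields $2^{\mathcal{H}(p)n+O(\log n)}$, which exceeds the target cardinality $2^{m}$ by a polynomial factor. Instead I would use the sharp estimate obtained from the binomial theorem: since $\sum_{j=0}^{n}\binom{n}{j}p^j(1-p)^{n-j}=(p+(1-p))^n=1$ and each summand is nonnegative, the single term $j=pn$ satisfies $\binom{n}{pn}\,p^{pn}(1-p)^{(1-p)n}\le 1$, whence
\[
\binom{n}{pn}\;\le\;p^{-pn}(1-p)^{-(1-p)n}\;=\;2^{\mathcal{H}(p)n}.
\]

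Combining the state count with this cardinality bound yields $h_x(\mathcal{H}(p)n)\le pn+1$, which is a fortiori $pn+O(\log n)$. There is no serious obstacle: the construction is immediate and the binomial inequality is standard. The one delicate point is to use the sharp bound just displayed rather than the asymptotic Theorem \ref{familiar2Andre}; if one insists on the latter, the excess factor can still be absorbed by spending $O(\log n)$ additional states to deterministically memorize a short prefix of $x$, which restricts the language by the required polynomial factor and keeps the total state count within the stated budget.
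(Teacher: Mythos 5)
Your construction is exactly the paper's: the chain-with-self-loops automaton of Figure \ref{figMultiRun}, whose length-$n$ language is the set of strings with exactly $pn$ zeros, of size $\binom{n}{pn}$. Where you genuinely diverge is in the counting step. The paper bounds $\binom{n}{pn}$ by citing Theorem \ref{familiar2Andre}, which only gives $\log_2\binom{n}{pn}=\mathcal H(p)n+O(\log n)$ and hence, read literally, leaves open that $\abs{L(M)\cap\Sigma^n}$ exceeds $2^{\mathcal H(p)n}$ by a polynomial factor --- and the $O(\log n)$ slack in the theorem sits in the state count, not in the parameter $m$, so it cannot absorb that excess; the paper papers over this with the word ``approximately.'' Your substitution of the exact inequality $\binom{n}{pn}p^{pn}(1-p)^{(1-p)n}\le 1$, i.e.\ $\binom{n}{pn}\le 2^{\mathcal H(p)n}$, closes this gap cleanly and in fact yields the sharper conclusion $h_x(\mathcal H(p)n)\le pn+1$, making the $O(\log n)$ term unnecessary. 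One caveat on your fallback remark: memorizing a $c\log n$-symbol prefix shrinks the accepted count by a factor of roughly $p$ per matched $0$ and $1-p$ per matched $1$, so it absorbs a $\mathrm{poly}(n)$ excess only when $p$ is bounded away from $0$ and $1$ (or the prefix contains enough minority symbols); if, say, $p=1/\sqrt n$ and $x$ begins with a long block of $1$s, the reduction is negligible. Since your main argument does not rely on that patch, the proof as a whole is correct.
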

		\begin{proof}
			Consider an automaton $M$ as in Figure \ref{figMultiRun} that has $p\cdot n$ many states, and
			that has one left-to-right arrow labeled 0 for each 0, and a loop in place labeled 1 for each consecutive string of 1s.
			Since $M$ accepts exactly those strings that have $p\cdot n$ many 0s,
			the number of strings accepted by $M$ is ${n\choose p\cdot n}$.
			By Theorem \ref{familiar2Andre} this is $\le 2^k$ approximately when $\mathcal H(p)n\le k$, and we are done.
		\end{proof}
		\begin{exa}
			A string of the form $0^a1^{n-a}$ satisfies $h^*_x(\log_2 n) = 2$ whereas $h^*_x(0)$ may be $n/2$.
			For instance $0011$ has $h^*_x(2)=2$. On the other hand $h^*_x(1)=3$ which is why this string is more complicated than $0110$.
		\end{exa}
		\begin{figure}[h]
			\begin{tikzpicture}[shorten >=1pt,node distance=1.5cm,on grid,auto]
				\node[state,initial] (q_1)   {$q_1$}; 
				\node[state] (q_2)     [right=of q_1   ] {$q_2$}; 
				\node[state] (q_3)     [right=of q_2   ] {$q_3$}; 
				\node[state] (q_4)     [right=of q_3   ] {$q_4$};
				\node        (q_dots)  [right=of q_4   ] {$\ldots$};
				\node[state] (q_m)     [right=of q_dots] {$q_m$};
				\node[state, accepting] (q_{m+1}) [right=of q_m   ] {$q_{m+1}$}; 
				\path[->] 
					(q_1)     edge [bend left]  node           {$x_1$}     (q_2)
					(q_2)     edge [bend left]  node           {$x_2$}     (q_3)
					(q_3)     edge [bend left]  node           {$x_3$}     (q_4)
					(q_4)     edge [bend left]  node [pos=.45] {$x_4$}     (q_dots)
					(q_dots)  edge [bend left]  node [pos=.6]  {$x_{m-1}$} (q_m)
					(q_m)     edge [bend left]  node [pos=.56] {$x_m$}     (q_{m+1})
					(q_{m+1}) edge [loop above] node           {0} ()
					(q_{m+1}) edge [loop below] node           {1} ();
			\end{tikzpicture}
			\caption{
				An automaton illustrating the linear upper bound on the automatic structure function from Theorem \ref{bound}.
			}
			\label{figLinearBound}
		\end{figure}
		\begin{thm}\label{bound}
			For any $x$ of length $n$,
			\[
			 	1\le h^*_x(m)\le n-m+1\text{ for }0\le m\le n.
			\]
		\end{thm}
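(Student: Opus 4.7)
\medskip

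\noindent\textbf{Proof proposal.}
The lower bound $h_x(m)\ge 1$ is immediate, since every NFA has at least one state, so I focus on the upper bound $h_x(m)\le n-m+1$. My plan is to exhibit, for each $m$ with $0\le m\le n$, a concrete NFA with exactly $n-m+1$ states accepting $x$ and at most $2^m$ strings of length $n$. The construction is essentially the one pictured in Figure~\ref{figLinearBound}: build a directed path $q_1\xrightarrow{x_1}q_2\xrightarrow{x_2}\cdots\xrightarrow{x_{n-m}}q_{n-m+1}$ of length $n-m$ whose edges spell out the prefix $x_1\cdots x_{n-m}$ of $x$, designate $q_1$ as the initial state and $q_{n-m+1}$ as the unique accepting state, and attach two self-loops at $q_{n-m+1}$, one labeled $0$ and one labeled $1$.

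Counting states gives exactly $n-m+1$. An easy induction on length confirms that the set of strings of length $n$ accepted by this NFA is precisely $\{x_1\cdots x_{n-m}\}\cdot\{0,1\}^m$, which has cardinality $2^m$ and contains $x$. Hence this NFA witnesses $h_x(m)\le n-m+1$.

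The boundary cases serve as sanity checks: for $m=0$ we obtain the all-path automaton with $n+1$ states accepting only $x$, and for $m=n$ we obtain a single-state automaton with loops on $0$ and $1$ accepting all $2^n$ binary strings of length $n$. There is no real obstacle here; the whole content of the theorem is packaged into the choice of automaton, so the argument is essentially a verification rather than a difficult construction.
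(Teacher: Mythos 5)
Your proposal is correct and follows essentially the same route as the paper: the paper's proof also uses the automaton of Figure~\ref{figLinearBound}, a deterministic path spelling a prefix of $x$ followed by an accepting state with self-loops on every symbol, which has $n-m+1$ states and accepts exactly the $2^m$ (more generally $b^m$) extensions of that prefix. Your write-up simply spells out the counting and the boundary cases that the paper leaves implicit.
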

		\begin{proof}
			$1\le h^*_x(n-k)\le k+1$ because we can start out with a sequence of determined moves,
			after which we accept everything, as in Figure \ref{figLinearBound}.
		\end{proof}
		\section{Upper bounds on structure function for automatic complexity}
			\begin{df}
				The dual automatic structure function of a string $x$ of length $n$ is a function
				$h^*_x:[0,n]\rightarrow [0,\lfloor n/2\rfloor + 1]$.
				We define the \emph{asymptotic upper envelope} of $h^*$ by
				\[
					\widetilde{h^*}(a)
					= \limsup_{n\rightarrow\infty}\max_{\abs{x}=n}\frac{h^*_x([a\cdot n])}{n},\quad \widetilde{h^*}:[0,1]\rightarrow [0,1/2]
				\]
				where $[\cdot ]$ is the nearest integer function.
				Let
				\[
					\tilde h(p) = \limsup_{n\rightarrow\infty}\max_{\abs{x}=n}\frac{h_x([p\cdot n])}{n},\quad \tilde h:[0,1/2]\rightarrow [0,1].
				\]
			\end{df}
			\begin{thm}[Main Theorem]\label{best}
				Assume $x$ is a binary string, so the alphabet size $b=2$.
				The asymptotic upper envelope $\tilde h$ of the automatic structure functions $h_x$ satisfies
				\[
					\tilde h(p)\le \psi(p):=
					\begin{cases}
						\mathcal H(\frac12-p),	 & \frac{\sqrt 3}{4} \le p\le \frac12,\\
						2-\alpha p,				 & \frac1{\alpha-1} \le p\le \frac{\sqrt 3}4,\\
						1-p,					 & 0\le p\le \frac1{\alpha -1},
					\end{cases}
				\]
				where
				\[
					\alpha=\frac{4}{\sqrt 3}\pars{
						2-\mathcal H\pars{
							\frac12-\frac{\sqrt 3}4
						}
					} =\mathcal H'\pars{
						\frac12-\frac{\sqrt 3}4
					} \approx 3.79994,
				\]
				\[
					\alpha = 2\log_2(2+\sqrt{3}).
				\]
			\end{thm}
			As Theorem \ref{best} shows, the largest number of paths is obtained by
			going \emph{fairly} straight to the loop state;
			spending half the time looping and half the time meandering; and then finally
			going equally fairly straight to the start state.
			The optimal value of $r$ obtained shows that half of
			the time between first reaching the loop state and finally leaving the loop state should be spent looping.
			Figures \ref{entropy} and \ref{entropy2} show our upper bounds for the automatic structure function.
			
			\begin{figure}[H]
				\begin{center}
					\includegraphics[height=7cm]{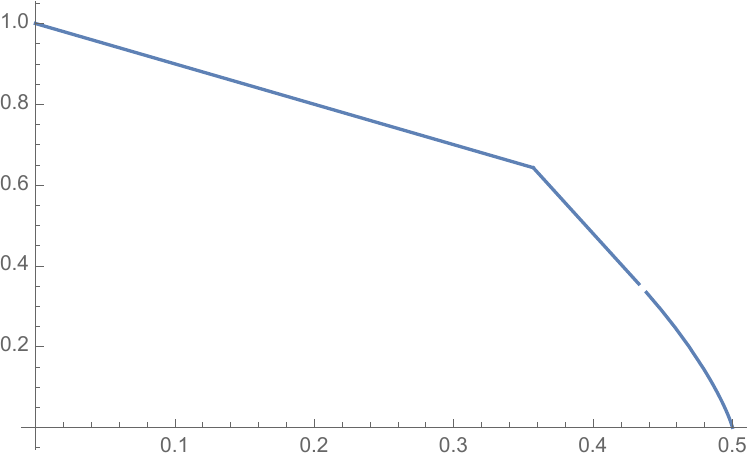}
				\end{center}
				\caption{Bounds for the automatic structure function for alphabet size $b=2$ when $\vec c=(1,-1,0)$; see Theorem \ref{best}. Figure produced using Mathematica with
				$y=\min\left(2-{2\log_2(2+\sqrt{3})}x, 1-x\right)$ for $0\le x\le\frac{\sqrt{3}}{4}$, and $y={-(\frac12-x)\log_2(\frac12-x)-(\frac12+x)\log_2(\frac12+x)}$ for $\frac{\sqrt{3}}{4}\le x$. 
				}
				\label{entropy}
			\end{figure}

			\begin{figure}[H]
				\begin{center}
					\includegraphics[height=7cm]{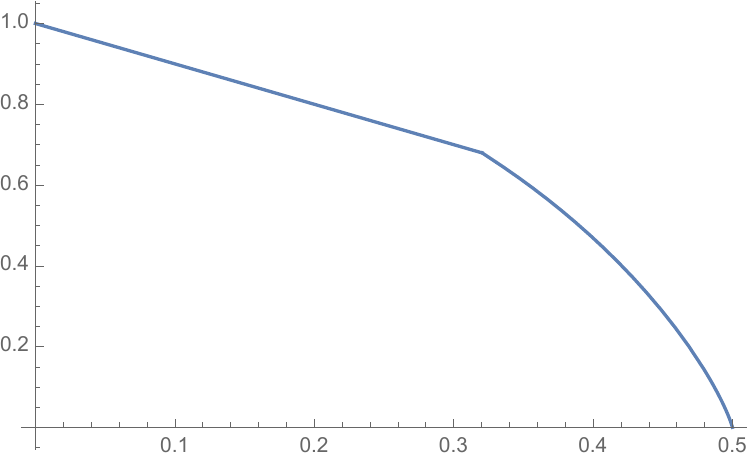}
				\end{center}
				\caption{Bounds for the automatic structure function for alphabet size $b=2$; see Theorem \ref{best}.
					When $\vec c=\frac12(1,-1,1)$, the entropy function is used on $[0.33,0.5]$, the tangent line on $[0.3205,0.330]$, and
					$y=1-x$ on $[0,0.3205]$.
					Figure produced using Mathematica with
					$y=\min(\sqrt{2}-2.29244x, 1-x)$ for $0\le x<0.330$, and
					$y={-(\frac12-x)\log_2(\frac12-x)-(\frac12+x)\log_2(\frac12+x)}$ for $0.330\le x$.
				}
				\label{entropy2}
			\end{figure}

			\section{Proof of Theorem \ref{best}}
				Consider a path of length $n$ through a Kayleigh graph with $q=pn$ many states.
				Let $t_1$ be the time spent before reaching the loop state for the first time.
				Let $t_2$ be the time spent after leaving the loop state for the last time.
				Let $s$ be the number of self-loops taken by the path.
				Let us say that \emph{meandering} is
				the process of leaving the loop state after having gone through a loop, and before again going through a loop.
				For fixed $p$ let
				\[
					\gamma(t_1,t_2,s,n) = {t_1\choose \frac{t_1-pn}{2}}
					{t_2\choose \frac{t_2-pn}{2}}
					{\chi(n,t,s) \choose s}
					{\chi(n,t,s) - s \choose \frac{\chi(n,t,s)-s}{2}} b^s
				\]
				where $\chi(n,t,s)=n-t$ and $t=t_1+t_2$.
				(By Lemma \ref{largeNo1}, we can also let $\chi(n,t,s)=(n-t+s)/2$, since the number of non-loops between loops must be even. 
				This gives a better upper bound.)

				Then the number of such paths is
				\begin{equation}
					N\le\sum_s\sum_{t_1}\sum_{t_2}\gamma(t_1,t_2,s,n)\label{1}
				\end{equation}
				since half of the meandering times must be backtrack times.
				\begin{lem}\label{largeNo1}
					Suppose $0\le k\le n$ with $n-k$ even.
					The number of $k$-element subsets of $\{1,\dots,n\}$ where
					the number of other elements between consecutive elements in the subset is always even is
					\[
						{(n-k)/2+k\choose k}.
					\]
				\end{lem}
				\begin{proof}
					The other elements come in pairs hence by merging the pair to one there are only $(n-k)/2$ of them.
				\end{proof}
				For instance, if $n=6$ and $k=2$, we get ${4\choose 2}=6$.
				Since
				\[
					   \limsup_{n\rightarrow\infty}\frac{\log_b\sum_1^n a_i}{n}
					\le\limsup_{n\rightarrow\infty}\frac{\log_b (n\cdot\max a_i)}{n}
					 = \limsup_{n\rightarrow\infty} \frac{\log_b\max a_i}{n},
				\]
				the sums can be replaced by maxima, i.e.,
				\[
					\limsup_{n\rightarrow\infty}\frac{\log_b N}{n}
					\le \limsup_{n\rightarrow\infty}\frac{\log_b \gamma(t_1,t_2,s,n)}{n},\quad (t_1,t_2,s)\in\argmax_{(t_1,t_2,s)}\gamma(t_1,t_2,s,n).
				\]
				By Theorem \ref{familiar2Andre},
				\[
					\limsup_{n\rightarrow\infty}\frac{\gamma(t_1,t_2,s,n)}{n}\le\limsup_{n\rightarrow\infty}\frac{\delta(t_1,t_2,s,n)}n
				\]
				where $\delta$ is
				\begin{eqnarray*}
					\sum_{i=1}^2 t_i \mathcal H_b\left(\frac12-\frac{pn}{2t_i}\right)
					+\left(\chi(n,t,s)\right)\mathcal H_b\left(\frac{s}{\chi(n,t,s)}\right)
					+&\left(\chi(n,t,s)-s\right)\mathcal H_b\left(\frac12\right)+s\\
					= \sum_{i=1}^2 t_i \mathcal H_b\left(\frac12-\frac{pn}{2t_i}\right)
					+\left(\chi(n,t,s)\right)\mathcal H_b\left(\frac{s}{\chi(n,t,s)}\right)
					+&\chi(n,t,s)\log_b 2 + (1-\log_b 2)s,\\
				\end{eqnarray*}
				where $\mathcal H_b=\mathcal H/\log_2 b$ and $t=t_1+t_2$. Note that $\mathcal H_b(1/2)=1/\log_2 b$.
				Now let $\Delta(T_1,T_2,r) = \delta(T_1n,T_2n,rn,n)/n$ for any $n$.
				It does not matter which $n$, since with $T=T_1+T_2$,
				$\chi(n,t,s)=c_{\mr n} n+c_{\mr t} t+c_{\mr s} s$ gives
				\[
					\frac1n\chi(n,Tn,rn)= c_{\mr n} + c_{\mr t} T + c_{\mr s} r
				\]
				and
				\[
					\frac{rn}{\chi(n,Tn,rn)}=\frac{r}{c_{\mr n} + c_{\mr t} T + c_{\mr s} r}
				\]
				and hence $\Delta(T_1,T_2,r)$ equals
				\[
				  	\sum_{i=1}^2 T_i \mathcal H_b\left(\frac12-\frac{p}{2T_i}\right)
					+ (c_{\mr n} + c_{\mr t} T 
					+ c_{\mr s} r)\mathcal H_b\left(\frac{r}{c_{\mr n} + c_{\mr t} T + c_{\mr s} r}\right)
				\]
				\[
					 + \pars{c_{\mr n} + c_{\mr t} T + c_{\mr s} r}\log_b 2
					+ (1-1/\log_2 b)r.
				\]
				\begin{lem}\label{T}
					$\Delta(T_1,T_2,r)$ is maximized at $T_1=T_2$.
				\end{lem}
				\begin{proof}
					Rewriting with $T=T_1+T_2$ and $\epsilon=T_1-T_2$,
					it suffices to show that with $g(x)=x\mathcal H(1/2-1/x)$,
					the function $f(\epsilon)=g(x+\epsilon)+g(x-\epsilon)$ is maximized at $\epsilon=0$.
					This is equivalently to $g$ being concave, which is a routine verification:
					\[
						g'(x) = \mathcal H(1/2-1/x) + x\mathcal H'(1/2-1/x)(1/x^2) = \mathcal H(1/2-1/x) + \mathcal H'(1/2-1/x)/x
					\]
					\[
						g''(x) = \mathcal H'(1/2-1/x)/x^2 + \mathcal H''(1/2-1/x)(1/x^2)(1/x) + \mathcal H'(1/2-1/x)(-1/x^2)
					\]					
					\[
						g''(x) = \mathcal H''(1/2-1/x)/x^3 < 0
					\]
				\end{proof}
				\begin{df}[Logit function]
					For any real $b>1$,
					\[
						\logit_b(x)=\log_b\left(\frac{x}{1-x}\right),\quad \logit_b:(0,1)\rightarrow \mathbb R.
					\]
				\end{df}
				A graphic of the logit function is given as Figure \ref{fig:Logit}.
				\begin{figure}[H]
					\centering
					\includegraphics[height=7cm]{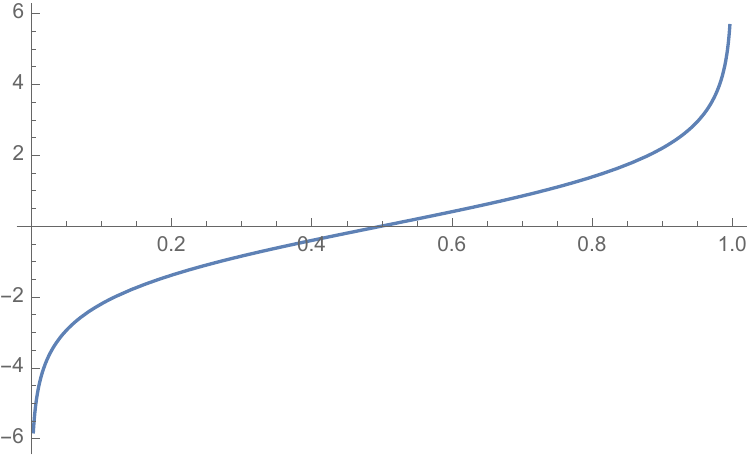}
					\caption{The logit function for $b=e$.
					Figure produced using Mathematica with $y=\log\left(\frac{x}{1-x}\right)$.}\label{fig:Logit}
				\end{figure}
				\begin{df}[Logistic sigmoid function]
					For any real $b>1$,
					\[
						\zeta_b(y)=\frac{1}{b^{-y}+1},\quad \zeta_b:\mathbb R\rightarrow (0,1).
					\]
				\end{df}
				\begin{lem}\label{trivielt}
					For any real $b>1$, the logit function $\logit_b(x)$ is a strictly increasing bijection.
					Its inverse is the logistic sigmoid function $\zeta_b(y)$.
				\end{lem}
				In light of Lemma \ref{T}, we now let $\Delta(T,r)=\Delta(T/2,T/2,r)$, so that
				\[
					\Delta(T,r) = T \mathcal H_b\left(\frac12-\frac{p}{T}\right)
					+ (c_{\mr n} + c_{\mr t} T + c_{\mr s} r)\mathcal H_b\left(\frac{r}{c_{\mr n}
					+ c_{\mr t} T + c_{\mr s} r}\right)
				\]
				\[ + (c_{\mr n} + c_{\mr t} T + c_{\mr s} r)\log_b 2 + (1-\log_b2)r.
				\]
				In the following Lemma it is useful to have the default case $(c_{\mr n},c_{\mr t},c_{\mr s})=(1,-1,0)$.
				The other case of interest is $(c_{\mr n},c_{\mr t},c_{\mr s})=(1/2,-1/2,1/2)$.
				\begin{lem}\label{rMax}
					For fixed $p$, $b$, and $T$, the function $r\mapsto \Delta(T,r)$ has a unique maximum where
					\[
						r = (c_{\mr n} + c_{\mr t} T + c_{\mr s} r)\frac{b}{b+2^{1-c_{\textrm s}}},
					\]
					hence after optimizing on $r$, $\Delta(T)-T \mathcal H_b\left(\frac12-\frac{p}{T}\right)=$
					\[
						(c_{\mr n} + c_{\mr t} T + c_{\mr s} r)
						\mathcal H_b\left(\frac{b}{b+2^{1-c_{\textrm s}}}\right)
						+ (c_{\mr n} + c_{\mr t} T + c_{\mr s} r)\log_b 2
					\]
					\[
						+ (1-\log_b2)\left[(c_{\mr n} + c_{\mr t} T + c_{\mr s} r)\frac{b}{b+2^{1-c_{\textrm s}}}\right],
					\]
					i.e.,
					\[
						\Delta(T)=T \mathcal H_b\left(\frac12-\frac{p}{T}\right)+
						(c_{\mr n} + c_{\mr t} T + c_{\mr s} r)
						\underbrace{\left\{
						\mathcal H_b\left(\frac{b}{b+2^{1-c_{\textrm s}}}\right)
						+ \log_b 2
						+ (1-\log_b2)\left[\frac{b}{b+2^{1-c_{\textrm s}}}\right]
						\right\}}_{c_b:=}
					\]
					\[
						=T \mathcal H_b\left(\frac12-\frac{p}{T}\right)+
						\frac{(c_{\mr n} + c_{\mr t} T) }{ \left(1-c_{\mr s}\frac{b}{b+2^{1-c_{\textrm s}}}\right)}
						c_b
					\]
					\[
						=T \mathcal H_b\left(\frac12-\frac{p}{T}\right)+
						(1-T)
						\underbrace{
						\frac{c_{\mr n}c_b}{\left(1-c_{\mr s}\frac{b}{b+2^{1-c_{\textrm s}}}\right)}
						}_{d_b}\quad\text{if }c_{\mr n}=-c_{\mr t}.
					\]
					namely
					\[
						r = \frac{(c_{\mr n} + c_{\mr t} T)\frac{b}{b+2^{1-c_{\textrm s}}} }{ \left(
								1-c_{\mr s}\frac{b}{b+2^{1-c_{\textrm s}}}
							\right)}
						= \frac{c_{\mr n} + c_{\mr t} T}{ \left(\frac{b+2^{1-c_{\textrm s}}}{b}-c_{\mr s}\right)}
					\]
					\[
						=\begin{cases}
							(1-T)\frac{b}{b+2}, & (c_{\mr n},c_{\mr t},c_{\mr s})=(1,-1,0) \\
							\frac12(1-T)/(\frac{b+\sqrt{2}}{b} - \frac12), & (c_{\mr n},c_{\mr t},c_{\mr s})=(1/2,-1/2,1/2)
						\end{cases}
					\]
				\end{lem}
				\begin{proof}
					We have
					\[
						\frac{d}{dx}\mathcal H_b(x)=-\logit_b(x).
					\]
					Thus by Lemma \ref{trivielt}, the inverse function of $\mathcal H_b'(x)$ is $y\mapsto \zeta_b(-y)=\frac{1}{b^y+1}$.
					Thus, we calculate
					\[
						\frac{\partial\Delta}{\partial r} = \mathcal H_b'(r/(c_{\mr n} + c_{\mr t} T
						+ c_{\mr s} r)) + 1-\log_b 2 +c_{\mr s}\log_b 2= 0\quad\text{iff}
					\]
					\[
						\frac{r}{c_{\mr n} + c_{\mr t} T + c_{\mr s} r} = (\mathcal H_b')^{-1}(\log_b 2 - 1
						- c_{\mr s}\log_b 2)
						= \frac{1}{
							b^{
								\log_b 2 - 1 - c_{\mr s}\log_b 2
							} + 1
						}
					\]
					\[	= \frac{1}{\frac{2^{1-c_{\mr s}}}b + 1}
						= \frac{b}{b+2^{1-c_{\textrm s}}},
					\]
					as desired.
					We also have
					\[
						\frac{d}{dx}\logit_b(x) = \frac{1}{\ln b} \cdot \frac{1}{x(1-x)}.
					\]
					Note that if $(c_{\mr n},c_{\mr t},c_{\mr s})=(1,-1,0)$ then $T<1$ gives $r>0$.
					Hence
					\[
						\frac{\partial^2\Delta}{\partial r^2}
						= \frac{\partial}{\partial r}\pars{-\logit_b\pars{\frac{r}{c_{\mr n} + c_{\mr t} T + c_{\mr s} r}}}
					\]
					\[
						= (-1)\frac{\partial}{\partial r}\left(\frac{r}{c_{\mr n} + c_{\mr t} T
						+ c_{\mr s} r}\right)\cdot\frac{1}{\ln(b)}\cdot\frac{1}{x(1-x)}\bigg\rvert_{x=\frac{r}{c_{\mr n}
						+ c_{\mr t} T + c_{\mr s} r}}
					\]
					\[
						=(-1)\left(\frac{c_{\mr n} + c_{\mr t} T + c_{\mr s} r - rc_{\mr s}}{(c_{\mr n}
						+ c_{\mr t} T
						+ c_{\mr s} r)^2}\right)\cdot\frac{1}{\ln(b)}\cdot\frac{1}{x(1-x)}\bigg\rvert_{x=\frac{r}{c_{\mr n}
						+ c_{\mr t} T + c_{\mr s} r}} < 0
					\]
					provided $c_{\mr t}=-c_{\mr n}<0$ and $c_{\mr s}>0$, as in our two cases.
				\end{proof}
				In light of Lemma \ref{rMax}, we now let $r=(1-T)\frac{b}{b+2}$ in $\Delta(T,r)=\Delta(T/2,T/2,r)$, giving
				\[
					\varphi(T,p) := T \mathcal H_b\left(\frac12-\frac{p}{T}\right) + (1-T)\mathcal H_b\left(\frac{r}{1-T}\right)
					+ (1-T)\log_b 2 + (1-\log_b2)r
				\]
				\[
					= T \mathcal H_b\left(\frac12-\frac{p}{T}\right) + (1-T)\mathcal H_b\left(\frac{b}{b+2}\right)
					+ (1-T)\log_b 2 + (1-\log_b2)(1-T)\frac{b}{b+2}
				\]
				which we will call $\varphi(T,p)$.
				
				To simplify calculations to come, we make Definition \ref{abbrev}.
				\begin{df}[Abbreviations]\label{abbrev}
					\begin{align*}
						T(p) &:=
							\frac{2p}{\sqrt{1      -\frac{4}{(b+2)^2}}}
						=   \frac{4p}{\sqrt{3}},\quad b=2.\\
						\varphi(T, p) &:= \Delta\left(T,(1-T)\frac{b}{b+2}\right)\\
						\beta(T) &:= \frac12-\frac{p}{T}.
					\end{align*}
				\end{df}
				Note that
				$\alpha_2 = \frac{2\cosh^{-1}(2)}{\ln 2}=2\log_2(2+\sqrt{3})$.
				Then
				\[
					\lim_{n\rightarrow\infty}\frac{\log_2 N}{n}\le \varphi(T, p) = T \mathcal H_b\left(\beta(T)\right)
			  	+(1-T)c_b.
				\]
				\begin{lem}\label{above}
					Suppose $0\le 2p\le T\le 1$ and $b\ge 2$. Suppose
					\[
						\varphi(T, p) = T \mathcal H_b\left(\beta(T) \right) + (1-T)d_b.
					\]
					for some constant $d_b$.
					Then we have
					\[
				 		0< \frac{\partial\varphi}{\partial T} \quad\Longleftrightarrow\quad	T < T(p)
					\]
					where $T(p)=\frac{2}{\sqrt{1-4b^{-2d_b}}}p$.
				\end{lem}
				\begin{proof}
					We have, using the further abbreviation $\beta=\beta(T)$,
					\[
						\frac{\partial\varphi}{\partial T}
					= \mathcal H_b\left(\beta\right)
					+ T\mathcal H_b'\left(\beta\right)\cdot \frac{p}{T^2} - d_b
					\]
					\[
						= \mathcal H_b(\beta) + \mathcal H_b'(\beta)\frac{p}{T} - d_b
						= \mathcal H_b(\beta) + \mathcal H_b'(\beta)(1/2-\beta) - d_b
					\]
					Note that $b^{-\mathcal H_b(x)}=x^x(1-x)^{1-x}$ and $b^{-\mathcal H_b'(x)}=x/(1-x)$. Thus
					now $0<\partial\varphi/\partial T$ iff $b^{-0} > b^{-\partial\varphi/\partial T}$ iff
					\[
						1 > \beta^{\beta}(1-\beta)^{(1-\beta)} \left(\frac{\beta}{1-\beta}\right)^{1/2-\beta} b^{d_b}
						= \beta^{1/2}(1-\beta)^{1/2}b^{d_b},\quad\text{iff}
					\]
					\[
						1 > \beta(1-\beta)b^{2d_b}.
					\]
					Since $0\le\beta\le 1/2$, this gives
					\begin{eqnarray*}
						\beta 						<& \frac{1-\sqrt{1-4b^{-2d_b}}}{2},&\\
						\frac{p}{T}= \frac12-\beta 	>& \frac{\sqrt{1-4b^{-2d_b}}}{2}, & \text{and}\\
						T 							<& \frac{2}{\sqrt{1-4b^{-2d_b}}}p.&
					\end{eqnarray*}
				\end{proof}
				\begin{cor}\label{below}
					Suppose $(c_{\mr n},c_{\mr t},c_{\mr s})=(1,-1,0)$.
					Suppose $0\le 2p\le T\le 1$ and $b\ge 2$. Then we have
					\[
				 		0< \frac{\partial\varphi}{\partial T} \quad\Longleftrightarrow\quad	T < T(p) := \frac{2}{\sqrt{1-4b^{-2d_b}}}p.
					\]
				\end{cor}
				\begin{proof}
					We let
					\[
						d_b = \frac{c_{\mr n}c_b}{1-c_{\mr s}\frac{b}{b+2^{1-c_{\textrm s}}}}
					\]
					\[
						=\begin{cases}
							c_b,& \text{if }(c_{\mr n},c_{\mr t},c_{\mr s})=(1,-1,0),\\
							\frac{\frac12c_b}{1-\frac12\frac{b}{b+\sqrt{2}}},
							& \text{if }(c_{\mr n},c_{\mr t},c_{\mr s})=\frac12(1,-1,1).
						\end{cases}
					\]
					If additionally we set $b=2$ then this is
					\[
						=\begin{cases}
							2,& \text{if }(c_{\mr n},c_{\mr t},c_{\mr s})=(1,-1,0),\\
							\sqrt{2}   ,& \text{if }(c_{\mr n},c_{\mr t},c_{\mr s})=\frac12(1,-1,1).
						\end{cases}
					\]
					We apply Lemma \ref{above}. Then $0<\partial\varphi/\partial T$ iff
					\[
						\beta < \frac{1-\sqrt{1-4b^{-2d_b}}}{2} =
							\frac{1-\sqrt{3/4}}{2},\quad\text{if }b=2\text{ and }(c_{\mr n},c_{\mr t},c_{\mr s})=(1,-1,0),\text{ and }\\
					\]
					\[
						\frac{p}{T}= \frac12-\beta 	> \frac{\sqrt{1-4b^{-2d_b}}}{2}=
							\frac{\sqrt{3/4}}{2},\quad\text{under the same condition.}\\
					\]
					So
					\[
						T< \frac{2}{\sqrt{1-4b^{-2d_b}}}p=
						\begin{cases}
							(2.3094)p=\frac{4}{\sqrt 3}p,					& b=2\text{ and }(c_{\mr n},c_{\mr t},c_{\mr s})=(1,-1,0)\\
							(3.0259)p=\frac{4}{\sqrt{4-4^{2-\sqrt{2}}}}p	& b=2\text{ and }(c_{\mr n},c_{\mr t},c_{\mr s})=\frac12(1,-1,1).
						\end{cases}
					\]
					Note that $b\ge 2$ and
					\[
						4b^{-2c_b}
						= \left(\frac{4}{b(b+2)}\right)^2 b^{-2(\log_b 2 - 1)}
					\]
					\[
						= \left(\frac{4}{b(b+2)}\right)^2 \frac{b^2}{4}
						= \frac{4}{(b+2)^2} \le \frac14 < 1
					\]
					give $1-4b^{-2c_b}>0$, as required.
				\end{proof}
				Let $L_b = \sqrt{1-4b^{-2d_b}}$.
				Note that $T(p)\le 1$ iff
				\[
					p\le L_b/2=
					\frac{\sqrt{1-4b^{-2d_b}}}{2}=
					\begin{cases}
						\frac{\sqrt 3}{4}=0.433,					& b=2\text{ and }(c_{\mr n},c_{\mr t},c_{\mr s})=(1,-1,0)\\
						\frac{\sqrt{4-4^{2-\sqrt{2}}}}{4}=0.330	& b=2\text{ and }(c_{\mr n},c_{\mr t},c_{\mr s})=\frac12(1,-1,1).
					\end{cases}
				\]
				and
				\[
					\varphi(T(p),p) = T(p) \mathcal H_b\left(\frac12-\frac{p}{T(p)}\right)
			  		+(1-T(p))d_b
				\]
				\[
					= \frac{2p}{L_b} \mathcal H_b\left(\frac12-\frac{L_b}{2}\right) + (1-\frac{2p}{L_b})d_b
				\]
				\[
					= d_b - \underbrace{
					\left(d_b-\mathcal H_b\left(\frac12-\frac{L_b}{2}\right)\right)\frac{2}{L_b}
					}_{\alpha_b}p
				\]
				Note
				\[
					L_2 =
					\begin{cases}
						\sqrt{3}/2					& \vec{c}=(1,-1,0)\\
						\sqrt{4-4^{2-\sqrt{2}}}/2=\sqrt{1-4^{1-\sqrt{2}}}	& \vec{c}=\frac12(1,-1,1),
					\end{cases}
				\]
				so
				\[
					\mathcal H_2\pars{\frac12-\frac{L_2}2} = 
					\begin{cases}
						0.354579,& \vec{c}=(1,-1,0)\\
						0.656615,& \vec{c}=\frac12(1,-1,1).
					\end{cases}
				\]
				Now we need
				\[
					\alpha_2 =
					\begin{cases}
						\left(2-\mathcal H_2\left(\frac12-\frac{L_2}{2}\right)\right)\frac{2}{L_2} = 2\log_2(2+\sqrt{3})=3.7999
						& \vec{c}=(1,-1,0)\\
						\left(\sqrt{2}-0.656615\right)\frac{2}{\sqrt{1-4^{1-\sqrt{2}}}}=2.29244
						& \vec{c}=\frac12(1,-1,1).
					\end{cases}
				\]
				Hence
				\[
					\lim_{n\rightarrow\infty}\frac{\log_2 N}{n}\le
					\psi(p):=\varphi\left(\min\left\{1,T(p)\right\}, p\right)
				\]
				\[
					= \begin{cases}
						\varphi(1,p)    = \mathcal H_b(1/2-p), & p\ge L_b/2;\\
						\varphi(T(p),p) = d_b - \alpha_b p,                      & p\le L_b/2.
					\end{cases}
				\]
				Note
				\[
					\psi'(p) =
					\begin{cases}
						\partial_1 \varphi(1,p)\cdot 0 + \partial_2 \varphi(1,p)\cdot 1& p<L_b/2\\
						\partial_1 \varphi(T(p),p)\cdot T'(p) + \partial_2 \varphi(T(p),p)\cdot 1& p>L_b/2.
					\end{cases}
				\]
				We can see that $\psi$ will be differentiable at the breakpoint as follows: by lemma above,
				$\partial_1\varphi(T,p)=0$ exactly at $T=T(p)$, so the first terms are both 0.
				The second terms are equal since $T(p)=1$ when $p=L_b/2$.
				That is, we apply the following lemma with $a=1$ and $L=L_b/2$.
				\begin{lem}
					Suppose $\varphi(T,p)$ is differentiable.
					Let $T(p)$ be the value of $T$ such that $\partial_1\varphi(T,p)=0$,
					let $L$ (depending on $a$) be such that for all $p$,
					\[
						T(p)\ge a\qquad\text{iff}\qquad p\le L,
					\]
					and define the function $\psi$ by
					\[
						\psi(p) = \varphi(\min(a,T(p)),p).
					\]
					Then $\psi$ is differentiable at $L$.
				\end{lem}
				\begin{proof}
					\[
						\psi'(p) =
						\begin{cases}
							\partial_1 \varphi(a,p)\cdot 0 + \partial_2 \varphi(a,p)\cdot 1& p<L\\
							\partial_1 \varphi(T(p),p)\cdot T'(p) + \partial_2 \varphi(T(p),p)\cdot 1& p>L.
						\end{cases}
					\]
				\end{proof}
				Another way is to note that $\frac{d}{dp}\varphi(1,p)=\log_2(\frac12-x)-\log_2(\frac12+x)$,
				which at $\sqrt{3}/4$ is $2\log_2(2-\sqrt{3})<0$.
				On the other hand
				$\frac{d}{dp}\varphi(T(p),p)=-\alpha_b=-2\log_2(2+\sqrt{3})=2\log_2(\frac{2-\sqrt{3}}{(2+\sqrt{3})(2-\sqrt{3})})$,
				so $\psi$ is actually differentiable at the breakpoint when $\vec c=(1,-1,0)$.
				In fact, we have differentiability for any $\vec c$ with $c_{\mr n}=-c_{\mr t}$, by the identity
				\[
					\log_b\left(\frac12-L_b/2\right)-\log_b\left(\frac12+L_b/2\right) = -\alpha_b
					= -\left(d_b-\mathcal H_b\left(\frac12-\frac{L_b}{2}\right)\right)\frac{2}{L_b}
				\]
				which follows from (and is equivalent to)
				\[
					b^{-2d_b}
					= \frac14-\frac{L_b^2}{4},
				\]
				where $L_b=\sqrt{1-4b^{-2d_b}}$.
				
				Consequently $\tilde h^*(\psi(p))\le p$. Since $\tilde h$ is decreasing it follows that $\tilde h(p)\le \psi(p)$.

			This completes the proof of Theorem \ref{best}. 
	\bibliographystyle{plain}
	\bibliography{structure-function-TCS}
\end{document}